\documentclass[letterpaper,USenglish,cleveref,numberwithinsect,thm-restate]{no-lipics-v2022}
\usepackage[utf8]{inputenc}
\usepackage{color}
\usepackage{csquotes}
\usepackage{xspace}
\usepackage{amsmath}
\usepackage{amssymb}
\usepackage{mathtools}
\usepackage{microtype}
\usepackage{bbm}
\usepackage{bm}
\usepackage[linesnumbered, noend]{algorithm2e}
\usepackage[draft]{fixme}
\usepackage{tikz}
\usetikzlibrary{arrows,arrows.meta,decorations.pathreplacing,decorations.pathmorphing,shapes,calc,patterns,shapes,matrix,math,quotes}
\usepackage{upgreek}

\DontPrintSemicolon

\nolinenumbers
\newcommand{\Oh}{\ensuremath{\mathcal{O}}\xspace}

\renewcommand{\emptyset}{\varnothing}

\newcommand{\set}[1]{\{#1\}}
\newcommand{\setof}[2]{\set{#1\colon\,#2}}

\DeclareMathOperator{\key}{key}
\newcommand{\alg}[1]{\textup{\texttt{#1}}}

\newcommand{\seq}{\mathcal{S}}
\newcommand{\del}{\mathcal{D}} 
\newcommand{\alive}{\mathcal{R}} 
\newcommand{\elems}{\mathscr{E}}
\newcommand{\tld}[1]{\tilde{#1}}
\newcommand{\tldkey}{\widetilde{\key}}
\newcommand{\hH}{\hat{H}}

\SetCommentSty{AlgoCommentStyle}

\newcommand{\mycomment}[1]{}

\newcommand{\seccref}[1]{\S\,\ref{#1}}

\title{The Sync Heap: Delete First, Ask Questions Later}

\author{Benjamin Aram Berendsohn}{Max Planck Institute for Informatics}{benjamin.berendsohn@fu-berlin.de}{https://orcid.org/0000-0002-3430-5262}{}

\author{Egor Gorbachev}{ETH Zürich and Max Planck Institute for Informatics}{peltorator@pm.me}{https://orcid.org/0009-0005-5977-7986}{}

\author{László Kozma}{Dresden University of Technology}{laszlo.kozma@tu-dresden.de}{https://orcid.org/0000-0002-3253-2373}{}

\authorrunning{B.\,A. Berendsohn, E. Gorbachev, and L. Kozma}

\Copyright{Benjamin Aram Berendsohn, Egor Gorbachev, and László Kozma}

\acknowledgements{}

\begin{document}

\maketitle

\begin{abstract}
Heaps (priority queues) are among the best-studied data structures in computer science. In this paper, we critically revisit the textbook assumption that in the comparison model at least one of the two standard heap operations of inserting an element and deleting the minimum must take logarithmic time. 

By decoupling the deletion itself from the act of revealing the identity of the deleted element to the user, we avoid the sorting barrier and obtain a novel trade-off between the complexities of heap operations.
This shows that 
\emph{the logarithmic barrier is not inherently the cost of deleting the minimum but rather the information cost of immediately learning which element was deleted}.
In the special case when the user inspects the heap state only constantly many times, we show that both 
insertions and deletions can be supported in constant amortized time.
As an application, this yields a runtime improvement from $\Oh(n \log n)$ to the optimal $\Oh(n)$ for a textbook unit-time scheduling problem. 

We obtain our results by designing a new data structure, the \emph{sync heap}, which gains speed by rearranging and compacting its operations until queries force it to synchronize and reveal its state. 
As a key component, we use the soft heap introduced by Chazelle as part of his minimum spanning tree algorithm.
Our data structure is simple, comparison-based, and deterministic, and our results are asymptotically optimal.

\end{abstract}

\section{Introduction}\label{sec:intro}

A central principle in the study of algorithms is the information-theoretic lower bound for sorting. It implies, in particular, that no comparison-based heap\footnote{By \emph{heap} we refer here to the abstract data structure, not to a particular implementation; we use the term in preference to \emph{priority queue} because of its brevity.} data structure can support both \alg{insert} and \alg{delete-min} operations in sub-logarithmic time, since doing so would imply a heapsort with $o(n\log{n})$ running time for an input of size $n$, contradicting the lower bound for sorting. Implicit in this argument is the assumption that deleting the minimum also reveals its identity; this assumption is rarely given much consideration, since in most heap implementations examining the minimum comes essentially for free.

In this paper, we revisit this basic assumption and find that it is precisely the act of revealing information to the user (\emph{observing} the heap) that drives the complexity of deletions.
That is, we show a time complexity dichotomy between an $\alg{extract-min}$ operation that returns the deleted element and a \emph{silent} $\alg{delete-min}$ operation that does not report anything.
We design an online data structure, the \emph{sync heap}, that executes $n$ heap operations containing $d$ silent deletions and $k$ heap observations in $\Oh(n + d \log k)$ total time.\footnote{We define $\log{x}$ as $\log_2(\max\{2,x\})$.}
In particular, if the user inspects the data structure relatively rarely, we show that all heap operations can be supported in constant amortized time, circumventing the sorting barrier. 

\subparagraph{Warmup: heap evaluation.} 
Let us start with a simpler motivating scenario in which the data structure is required to reveal information \emph{only at the end}. 
Consider a sequence $\seq = (\sigma_1, \sigma_2, \ldots, \sigma_n)$ of $\alg{insert}$ and $\alg{delete-min}$ operations given upfront as input. As usual, an operation \alg{insert}$(x)$ inserts an element $x$ with a given key into the heap, and a \alg{delete-min} operation removes the element with the smallest key from the heap. 
The task is to compute the final state of the heap obtained by executing $\seq$ from an empty heap.
We require keys to be distinct and constant-time comparable, but otherwise place no restriction on them. The sequence $\seq$ can intermix \alg{insert} and \alg{delete-min} operations arbitrarily, even allowing for prefixes with more \alg{delete-min} than \alg{insert} operations. In such cases, i.e., when executing a \alg{delete-min} on an empty heap, we say that the individual operation is \emph{unsuccessful}, not affecting the state of the data structure. 

The final state of the heap is revealed in the form of an (unordered) list containing the inserted elements that survive all deletions, staying in the heap until the end. For example, for $\seq = \bigl($\alg{insert}$(a_2)$,
\alg{insert}$(a_4)$,
\alg{delete-min},
\alg{insert}$(a_3)$, 
\alg{insert}$(a_5)$,
\alg{delete-min},
\alg{delete-min},
\alg{insert}$(a_1)
\bigr)$, where $\key(a_i) = i$, a valid output is $(a_5, a_1)$. 
Alternatively, we could ask to reveal the unordered list of all \emph{deleted} elements, which would be $(a_3,a_2,a_4)$. 
It is essential that the heap state is revealed only at the end. 
If we could match each \alg{delete-min} to the identity of the deleted element, the sorting lower bound would apply, implying a logarithmic lower bound on the cost of operations.

We call this problem \emph{heap evaluation}; it can be seen as a 
restricted mode of operation of heaps, natural in applications with limited data dependency, i.e., where 
intermediate operations are relevant only inasmuch as they affect the final state of the data structure. 
The task can of course be solved in $\Oh(n\log{n})$ time by running a standard online heap. The question we ask (and answer) in this paper is whether faster, i.e., $\Oh(n)$-time algorithms are possible by taking advantage of the offline nature of the problem. We observe that even the easier problem of reporting the \emph{minimum} of the heap in the end was not known to be solvable faster than $\Oh(n\log{n})$. 

\medskip

The heap evaluation problem can be seen as a generalization of the \emph{selection problem}, as follows. If the two types of operations are \emph{separated}, i.e., if the sequence consists of $n$ \alg{insert} operations followed by $d$ \alg{delete-min} operations, then the task amounts to finding the $d$ smallest elements in a list of length $n$. This is a classical problem, well known to be solvable in deterministic $\Oh(n)$ time, e.g., by the celebrated \emph{median-of-medians} algorithm~\cite{BlumEtAl1973}. 
Algorithms for selection, however, do not seem applicable to the general problem when \alg{insert} and \alg{delete-min} operations are interleaved in more intricate ways (say, if each \alg{delete-min} is preceded by two insertions).

It can be observed that the information-theoretic lower bound for sorting does not transfer to our problem. Indeed, since the number of different outputs is at most $2^n$ (subsets of the inserted elements), the classical argument can at best yield a lower bound of $\log_2{(2^n)} = n$ on the decision-tree complexity of the problem.
For other problems of a similar flavor, e.g., for \emph{element distinctness} or \emph{longest increasing subsequence}, an $\Omega(n\log{n})$ lower bound can be shown in more restricted comparison or algebraic decision-tree models~\cite{Ben-Or83,FredmanLIS}. 
These results employ adversary arguments or invoke topological properties of the solution space. 
Transferring such techniques to the heap evaluation problem is ruled out by our first result: an optimal, linear-time algorithm. 

\begin{theorem}[Heap evaluation]\label{thm1}
    There is a deterministic comparison-based algorithm that, given a sequence $\seq$ of $\alg{insert}$ and $\alg{delete-min}$ operations on an initially empty heap, computes the (unordered) list of elements remaining in the heap after the execution of $\seq$ in $\Oh(|\seq|)$ total time.\lipicsEnd
\end{theorem}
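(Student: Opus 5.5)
I would prove Theorem~\ref{thm1} by a linear-time reduction that uses Chazelle's soft heap with a constant error rate $\varepsilon$, say $\varepsilon = 1/8$, followed by recursion on a constant fraction of the input. The total cost then satisfies $T(n) \le \Oh(n) + T(cn)$ for some $c<1$, which solves to $\Oh(n)$.

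\textbf{Step 1: soft-heap pass.} Execute $\seq$ on a soft heap, implementing each \alg{delete-min} as a soft extract-min that returns the element of smallest \emph{current} (possibly corrupted) key. This takes $\Oh(n)$ time, since $\varepsilon$ is constant. Let $C$ be the set of elements that are ever corrupted; soft-heap guarantees give $|C| \le \varepsilon n$. The uncorrupted elements are handled correctly relative to each other: an uncorrupted element extracted at some moment is the minimum among the uncorrupted elements present. Errors can only arise because a corrupted element with a smaller true key was ``skipped''.

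\textbf{Step 2: structural lemma.} I would prove an exchange (monotonicity) lemma for heap evaluation. Conceptually, remove the elements of $C$ from $\seq$, together with their insertions. The soft-heap run then correctly evaluates the remaining sequence on $U = \text{elements} \setminus C$, except that some \alg{delete-min} operations were ``spent'' on corrupted elements. Each such operation can be treated as a deletion that is uncertain. The key fact to establish is that inserting or removing a single element changes the final heap state by at most one element, and that the affected element can be identified through a matching structure between deletions and elements. From this I would derive that the true final state equals the soft-heap survivors of $U$, possibly corrected at only $\Oh(|C|)$ positions. Every uncorrupted element whose fate could be affected should be confined to a set $B$ with $|B| = \Oh(|C|)$. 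Candidates for $B$ are the uncorrupted elements extracted at moments when some corrupted element present had a smaller true key, charged to those corrupted elements.

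\textbf{Step 3: recursion.} Form a reduced operation sequence on $C \cup B$. It keeps the insertions of these elements in order. At each relevant time, it also keeps exactly as many \alg{delete-min} operations as were not definitively resolved by uncorrupted elements outside $B$. Its length is $\Oh(\varepsilon n)$, which is at most $n/2$ once $\varepsilon$ is small enough. Solve it recursively and merge its survivors with the certain survivors from $U \setminus B$. Base cases of constant size are solved directly.

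The main obstacle is Step~2. I need a precise statement of which deletions are resolved and which are not, so that the reduced sequence has size proportional to $|C|$ and not to the number of operations it interacts with. I would first try to formulate heap evaluation as a greedy matching: each \alg{delete-min} is matched to the smallest available inserted element. Then corruption errors can be traced along alternating paths of the matching, and I would bound the total length of these paths by $\Oh(|C|)$ using the amortized charging of the soft heap analysis. If this charging fails, the fallback is to bound the reduced size via the soft heap's guarantee on the number of corrupted elements present at any one time, which is at most $\varepsilon$ times the number of insertions so far.
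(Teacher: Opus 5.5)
Your proposal has a genuine gap. It starts with the bound $|C|\le\varepsilon n$ in Step~1, which is false, and the rest of the argument depends on it. The soft heap (\cref{lm:soft-heap}) bounds only the corrupted elements \emph{currently in the heap} by $\varepsilon I$. Nothing bounds the set of elements that are \emph{ever} corrupted, and no such bound can hold. To see this, run $n$ insertions followed by $n$ \alg{delete-min}s. An uncorrupted element, when extracted, is smaller than every other uncorrupted element present, so the uncorrupted elements come out in sorted order. Sorting $C$ separately and merging then gives a comparison sort using $\Oh(n\log(1/\varepsilon)) + |C|\log_2 n + n$ comparisons. By the $\log_2 n!$ lower bound, some input has $|C| \ge n - \Oh(n\log(1/\varepsilon)/\log n)$. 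So almost every element may be corrupted at some point, and the reduced instance of Step~3 need not shrink.

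Step~2 also fails on its own terms:
\begin{itemize}
\item An uncorrupted element extracted by the soft heap need not be deleted in the exact run. Once the soft heap corrupts the current minimum, a final \alg{delete-min} may remove a larger uncorrupted element, which survives in the exact execution. So your forward pass certifies no deletions at all.
\item Your candidate set $B$ is not $\Oh(|C|)$: a single corrupted element with a small true key can stay in the heap during arbitrarily many extractions.
\item The one-element exchange property is true, but it does not tell you \emph{which} element differs without essentially simulating an exact heap.
\item The ``unresolved deletions'' of Step~3 are never defined or shown to give an equivalent instance.
\end{itemize}
Your fallback does not repair this, since a bound on corrupted elements present at one moment does not control the elements corrupted over the whole run.

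The paper uses the at-any-moment guarantee only once, at the very end, and pairs it with a dual certificate. Running forward, every uncorrupted element still in the soft heap at the end is a guaranteed survivor (\cref{lm:soft-uncorrupted}: corruption can only postpone deletions of corrupted elements, never save uncorrupted ones). If at most $n/2$ deletions occur, at least $n/4$ such elements remain, and their insertions can be removed. What you are missing is the treatment of many deletions. Scan $\seq$ backwards with a max-heap, where each \alg{delete-min} opens a slot and an insertion that overfills the open slots evicts the maximum (\cref{alg:reversed}). This computes $\del(\emptyset,\seq)$, and with a soft max-heap the uncorrupted elements left at the end are certified deletions, again at least $n/4$ of them. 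Each such element can be removed together with the first \alg{delete-min} after its insertion without changing the final state (\cref{lem25}), which a stack scan implements in linear time. One of the two cases always removes a constant fraction of the insertions, which gives the geometric recursion.
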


A key ingredient in our algorithm is Chazelle's \emph{soft heap}~\cite{Chazelle2000}, a relaxed heap data structure that trades accuracy for speed. The soft heap may increase the keys of a controlled fraction of elements called \emph{corrupted elements}, and this limited corruption enables faster operations.
In spite of corruption, soft heaps allow us to recover partial information about the answer, thus reducing the problem size.
Soft heaps were introduced by Chazelle as a crucial component of his minimum spanning tree algorithm~\cite{Chazelle2000a}, achieving the best known deterministic running time for the problem, an $\upalpha{(n)}$ (inverse-Ackermann) factor away from linear. 
Since then, simpler presentations of soft heaps have been developed~\cite{Kaplan2013,Brodal2021}, and it has been repeatedly asked whether this seemingly powerful data structure has further applications. Applications to certain structured selection problems have previously been found~\cite{KaplanKozmaEtAl2018}. Our result provides a different, novel application of soft heaps.

\subparagraph{The sync heap.}

The heap evaluation problem we discussed is a one-observation special case of the more general task of designing a heap data structure with optimal time complexity. 
In the following, we consider an arbitrary \emph{online} sequence of heap operations and assume that the user can inspect the heap at 
arbitrary times during execution. 

Key to our results is a careful distinction between 
two types of operations: \emph{modifications} and \emph{observations}. 
Standard heap interfaces often conflate modifications (e.g., deleting the minimum) with observations (e.g., learning the identity of the deleted element). The efficiency of our new data structure 
results from strictly separating these roles.\footnote{The name \emph{sync heap} is meant to evoke the mechanism whereby the data structure needs to synchronize only when the user chooses to observe its state, and can flexibly optimize work between observations.}
The modifications supported by the sync heap are the standard $\alg{insert}$ and $\alg{delete-min}$ operations.
To emphasize the fact that in our setting $\alg{delete-min}$ does not report the deleted element, we sometimes call it a \emph{silent} $\alg{delete-min}$.
The two observations we support are $\alg{find-min}$ and $\alg{reveal-deletions}$. The first is the standard operation of returning the element with the smallest key currently in the heap. 
The second operation is less common but nonetheless natural if we separate modifications and observations: it returns an unordered list of the elements deleted since the last reveal operation.
(One can think of this operation as inspecting the state of the heap, but we return the \emph{deleted} elements instead of those remaining in the heap, because we do not want the output itself to dominate the running time.)

Existing optimal comparison-based heaps (e.g., Fibonacci heaps~\cite{Fibonacci}) provide a guarantee of $\Oh(n + d \log n)$ on the total running time of processing a sequence of $n$ heap operations containing $d$ \alg{delete-min} operations.
By explicitly separating modifications and observations, we arrive at a more fine-grained bound, 
showing 
that the optimal time complexity of deletions depends not on the total number of operations but rather on the number of observations. 
Precisely, we show the following:

\begin{theorem}[Sync heap]\label{thm2}
    There is a data structure that can execute an online sequence of $n$ heap operations of the form $\alg{insert}$, (silent) $\alg{delete-min}$, $\alg{find-min}$, and $\alg{reveal-deletions}$ in $\Oh(n + d\log{k})$ total time, where $d$ is the number of \alg{delete-min} operations and $k$ is the number of \alg{find-min} and $\alg{reveal-deletions}$ operations.\lipicsEnd
\end{theorem}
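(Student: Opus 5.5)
We build the sync heap on top of Theorem~\ref{thm1}. The idea is to buffer modifications lazily and resolve them only at observations. Throughout we maintain a standard heap $H$ (say a Fibonacci heap, or a sorted structure where convenient) holding the elements known to be alive at the last synchronization point. Between two consecutive observations, insertions and silent deletions are simply appended to a buffer $B$ in $\Oh(1)$ time each. When an observation arrives, we must \emph{synchronize}: determine which elements of $H \cup B$ were deleted during the buffered segment. We can then return them for \alg{reveal-deletions}, or compute the new minimum for \alg{find-min}.

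The synchronization step is a heap-evaluation instance with a preloaded heap. Let $d_j$ be the number of deletions in the segment. Only the $d_j$ smallest elements of $H$ can possibly be deleted by those deletions, together with some of the newly inserted elements. So we extract these $d_j$ smallest elements from $H$ and prepend them as insertions to the buffered sequence. Every element of $H$ outside this set remains alive, because any deleted old element is among the $d_j$ smallest old ones. We then run the algorithm of Theorem~\ref{thm1} on the resulting sequence in $\Oh(d_j + |B|)$ time. The survivors are reinserted into $H$, and the deleted elements are recorded for the next \alg{reveal-deletions}. \alg{find-min} then reads the minimum of $H$.

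The cost of extracting $d_j$ elements from a Fibonacci heap is $\Oh(d_j\log|H|)$, which only gives $\Oh(n+d\log n)$. To obtain $\log k$, we replace $H$ with a structure whose extraction cost depends on the number of synchronizations rather than on its size. We organize $H$ as a collection of unordered \emph{groups} $G_1,\dots$. Each group is created at one synchronization, and the groups are kept in a heap keyed by their minima. The maximum group count is $\Oh(k)$, since each synchronization creates $\Oh(1)$ groups.

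To extract the $d_j$ smallest elements, we repeatedly pop the group with the smallest minimum. If we need only part of a group, we split it by linear-time selection into the needed part and the rest, and reinsert the rest as a group. Each pop costs $\Oh(\log k)$ plus time linear in the elements taken. The number of pops is $\Oh(\min(d_j,\#\text{groups})+1)$, so the total is $\Oh(d_j\log k + d_j)$ plus the linear selection work, charged to the elements actually removed. Survivors of a synchronization form a single new group, with $\Oh(\log k)$ for one heap insertion.

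The main obstacle is accounting for the linear selection cost on partially consumed groups. Splitting a huge group to take a few elements costs time proportional to the group size, not to $d_j$. To fix this, we would split groups geometrically, sorting lazily: when a group of size $s$ is first touched, partition it into buckets of sizes $d_j,2d_j,4d_j,\dots$ by repeated selection, in $\Oh(s)$ total time. This cost is charged once per element per first touch, amortized against each element's insertion, and the buckets become separate groups. The catch is that bucketing multiplies the number of groups by a logarithmic factor. So the remaining work is to verify, with a potential argument, that the group count stays $\Oh(k\log n)$ only in a way that keeps heap operations at $\Oh(\log k)$ amortized. Alternatively, groups can be merged when consecutive, making the group count $\Oh(k)$. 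The linear-time heap evaluation of Theorem~\ref{thm1} handles the interleaving within each segment, so the $\log k$ term arises only from ordering among groups.
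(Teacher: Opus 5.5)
Your synchronization step is correct and is a legitimate variant of the paper's. Old elements are deleted in increasing key order and there are at most $d_j$ deletions, so only the $d_j$ smallest elements of $H$ can die. The others are never the minimum at any deletion, so prepending just those $d_j$ elements and running the linear-time heap evaluation yields exactly the deleted and surviving sets. (The paper instead evaluates the buffered block $\seq$ from the empty heap, inserts $\del(\emptyset,\seq)$ into the old heap, removes the $\delta$ smallest, and then inserts $\alive(\emptyset,\seq)$.)

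Both versions, however, rest on a heap with $\Oh(1)$ amortized insertion that supports cheap bulk removal of its $\ell$ smallest elements. That primitive is exactly what your proposal does not supply, and you yourself leave its analysis as remaining work. The group structure as described does not do the job:
\begin{itemize}
\item Groups created at different synchronizations interleave arbitrarily in key. Popping the group with the smallest minimum and splitting off ``the needed part'' presupposes knowing how many of its elements are among the $d_j$ smallest overall, which depends on the other groups.
\item A group is unordered apart from its known minimum, so in the worst case determining this costs time proportional to the total size of the groups involved, not $\Oh(d_j\log k)$.
\item Geometric bucketing is the right kind of idea, but your accounting charges only the \emph{first} touch of a group to insertions. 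Later synchronizations with smaller $d_{j'}$ must re-split buckets much larger than $d_{j'}$, buckets from different groups still overlap, and the growth in the number of groups is left unresolved.
\end{itemize}
No potential argument is given that bounds all of this by $\Oh(n+d\log k)$.

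The paper closes this gap by using the selectable heap of Sandlund and Zhang as a black box. It offers $\Oh(1)$ amortized \alg{insert} and \alg{find-min}, and removal of the $\ell$ smallest of $m$ elements in $\Oh(\ell\log(m/\ell))$ amortized time. Note also that the paper does not aim, as you do, for $\Oh(\delta_i\log k)$ per synchronization. It bounds block $i$ by $\Oh(|\seq|+\delta_i\log(n/\delta_i))$ and then amortizes across blocks using $\delta_i\log_2(n/\delta_i)=2\delta_i\log_2 i+\delta_i\log_2(n/(i^2\delta_i))\le 2\delta_i\log_2 i+n/i^2$, together with $\sum_i 1/i^2=\Oh(1)$. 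To complete your proof you need either to invoke such a structure and add this summation, or to actually construct and analyze a bulk-extraction structure achieving total time $\Oh(n+d\log k)$.
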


We show the bound of \cref{thm2} to be optimal through a connection to the \emph{multiple selection} problem~\cite{Dobkin1981}.
In fact, a more careful analysis of our sync heap matches a known entropy lower bound~\cite{Dobkin1981} that, in terms of our problem, takes into account the distribution of deletions between observations.
More strongly, the lower bound holds even if no $\alg{reveal-deletions}$ operations are used.

If the only observation operation is a single $\alg{reveal-deletions}$ at the end, \cref{thm2} recovers the time complexity of \cref{thm1} for heap evaluation.
On the other hand, if each (silent) $\alg{delete-min}$ is coupled with a $\alg{find-min}$, \cref{thm2} recovers the interface and amortized time complexity of standard optimal heaps. 
Between these two extremes, we get a smooth interpolation. To the best of our knowledge, the sync heap is the first data structure with such guarantees. 
Theorem~\ref{thm2} shows that, perhaps surprisingly, \emph{looking at the minimum before deleting it} carries with it an inherent cost, whereas omitting this observation makes deletions cheaper.

To achieve the result of \cref{thm2}, we build on the approach of \cref{thm1} and efficiently bundle modifications between consecutive observations. 
This involves transforming a sequence of arbitrarily interleaved insertions and deletions into an equivalent sequence in which all deletions are performed consecutively.
To perform several deletions efficiently, we use the recently introduced \emph{selectable heap}~\cite{Sandlund2022}.
Interestingly, this data structure also relies on soft heaps in one of its possible implementations.

\subparagraph{Application and future work.}
We find the separation of modifications from observations and the resulting fine-grained bounds obtained via the sync heap to be of intrinsic interest, as they concern a basic data-structural question and model of computation. We highlight one algorithmic application to a classical scheduling problem called \emph{single-machine maximum-profit unit-time scheduling}. 

The standard $\Oh(n \log n)$-time algorithm for this problem dates back to the 1970s~\cite{Lawler1976}.
In \seccref{sec:scheduling} we show an improved linear-time algorithm as a simple application of sync heaps.
The problem is covered in many classical algorithms textbooks including~\cite[\S\,16.5]{Cormen2009},~\cite[\S\,4.4.2]{Brucker2007},~\cite[\S\,4.4]{Horowitz2008}, but no linear-time algorithm was known for it, to the best of our knowledge.

A possible future direction is to extend the model to other heap operations, such as \alg{decrease-key} and \alg{meld}, which appear to require new ideas. 
We believe that our techniques may have applications to other data structures as well.

\subparagraph{Further related work.} 
Heaps are among the best-studied data structures in computer science, due to their varied applications, e.g., in graph algorithms. A broad survey of heaps up to 2013 is given by Brodal~\cite{Brodal13}. Classical binary heaps~\cite{williams1964algorithm} already support logarithmic-time basic operations, while optimal amortized times for all operations (including \alg{meld} and \alg{decrease-key}) were first attained by Fibonacci heaps~\cite{Fibonacci}; these bounds were deamortized in strict Fibonacci heaps~\cite{BrodalLT25}.
Pairing heaps~\cite{Pairing} are a particularly elegant ``bookkeeping-free'' design that has influenced a line of work in ``self-adjusting data structures''. More recent designs include {quake heaps}~\cite{Chan13}, {hollow heaps}~\cite{Hollow}, {slim} and {smooth heaps}~\cite{KozmaSaranurak2020, slim}, among others.

The sorting lower bound on the complexity of heap operations can be circumvented in various specialized settings. These include heaps with \emph{integer keys}~\cite{EBKZ76,Willard1983,FredmanWillard1993,Thorup2004,Thorup2007}, operation sequences exhibiting \emph{structure}, such as spatial and/or temporal locality~\cite{Iacono00,IaconoLangerman2005,Elmasry2006,DeanJones2009,ElmasryFarzanIacono2012,HaeuplerHladikEtAl2024,HRR25,Rutschmann2026,HaeuplerHladikEtAl2026}, heaps with \emph{predictions}~\cite{BenomarC24}, and heaps with relaxed correctness guarantees, such as allowing \emph{approximate answers}~\cite{ThorupZamirEtAl2019,Dumitrescu19}.

Similar in spirit to our work is~\cite{KaplanZZ15}, where an essentially tight trade-off is shown between the amortized costs of \alg{insert}, \alg{delete}, and \alg{find-min}; see also~\cite{BrodalCR96}. In particular, their \emph{lazy binomial heap} data structure achieves the amortized times $\Oh(1)$ for \alg{insert}, $\Oh(t)$ for \alg{delete}, and $\Oh(n/2^{2t} + \log{n})$ for \alg{find-min}. Choosing $t = \log{k}$ yields a total of $\Oh(n+d\log{k} + k\log{n})$, where $k$ is the number of \alg{find-min} and $d$ is the number of \alg{delete} operations. 
Even though the bound 
resembles the one in \cref{thm2}, the
result is not directly comparable to ours, as it concerns a \alg{delete} operation (less common for heaps) that is given a pointer to the element to be deleted, rather than \alg{delete-min}, for which the minimum crucially needs to be identified first. The techniques used in~\cite{KaplanZZ15} are also rather different from ours. 

Our heap evaluation problem can also be seen as related to the \emph{checking} problem, where a given offline sequence of $n$ heap operations (together with the outputs of operations) has to be verified for correctness. A linear-time algorithm is known for this task~\cite{FinklerM99}, using offline union-find techniques~\cite{Gabow1985}. 
If the ranks of all inserted elements are known, the algorithm of Gabow and Tarjan~\cite[\S\,4]{Gabow1985} can also solve the ``offline'' heap evaluation problem of \cref{thm1} in linear time, even identifying the element removed by each individual deletion.
However, finding the ranks amounts to sorting; what makes our heap evaluation problem interesting is exactly the fact that the ranks are unknown.

\section{Heap Evaluation}\label{sec:pq-set}

In this section, we prove the following theorem.

\begin{theorem}\label{thm:pq-set-main-thm}
    There is a deterministic comparison-based algorithm that, given a sequence $\seq$ of $\alg{insert}$ and $\alg{delete-min}$ operations on an initially empty heap, computes the (unordered) list of elements remaining in the heap after the execution of $\seq$ in $\Oh(|\seq|)$ total time.%
    \footnote{
        Note that if distinct elements are allowed to have equal keys, the solution to the heap evaluation problem is not well-defined.
        For this reason, throughout the paper we assume that the keys of all elements are distinct.
        This limitation can be circumvented by a standard trick of 
        comparing equal-key elements according to their insertion times.
    }
    \lipicsEnd
\end{theorem}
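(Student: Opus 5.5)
The plan is to prove \cref{thm:pq-set-main-thm} by \emph{linear-time instance reduction}. We would design a procedure that takes an instance $\seq$ with $m$ inserted elements. In $\Oh(|\seq|)$ time it should (i) certify the fate, survivor or deleted, of at least a constant fraction of the elements, and (ii) output an equivalent instance $\seq'$ over the remaining elements with $|\seq'| \le c\,|\seq|$ for some constant $c<1$. Recursing then gives a geometric series and total time $\Oh(|\seq|)$. First we normalize the instance by deleting unsuccessful \alg{delete-min} operations, which a linear scan with a counter detects. After this every deletion removes an element, and $|\seq| = \Theta(m)$.

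The structural tool is a \emph{monotonicity lemma}, proved by a simple exchange/induction argument over the operations. Fix an element $x$ and increase the keys of some \emph{other} elements. Then the set of operations at which $x$ is present can only shrink: $x$ can only become "more deleted". Dually, decreasing other keys can only help $x$ survive. We then run Chazelle's soft heap with a small constant error rate $\varepsilon$ on $\seq$. This is a heap on the corrupted keys, where each corrupted key is at least the true key, and at most $\varepsilon m$ elements are corrupted. By the lemma, an \emph{uncorrupted} element $x$ that survives in the soft-heap run also survives in the true run, since the true keys of the other elements are no larger than their soft keys. We then need a second observation: an element that survives to the end is never the minimum at a successful deletion, so erasing its insertion from $\seq$ leaves the fate of every other element unchanged. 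Hence all certified survivors can be output and removed.

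Certified deletions need a symmetric argument. We would obtain it from a time-reversal/duality transformation. It maps $\seq$ to a sequence in which the roles of survivors and deleted elements swap: reverse time, negate keys, and exchange insertions with deletions suitably, using the matching characterization that $x$ is deleted iff some interval after its insertion contains more deletions than insertions of keys $\le \key(x)$. Running the same soft-heap certification on the dual certifies uncorrupted deleted elements. Once an element is certified deleted, we must remove it together with \emph{one} deletion operation. We would choose the deletion given by the soft heap's own matching, i.e., the \alg{delete-min} at which it left the soft heap. We then argue that removing this pair preserves the final state for the rest. Either the survivors or the deleted elements form at least half of the elements, and only $\varepsilon m$ elements are uncertified in each run. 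So each round removes a constant fraction of the elements, and their operations with them.

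The main obstacle is the deletion side: making sure that removing a certified-deleted element together with a chosen \alg{delete-min} yields an instance that is genuinely equivalent on the remaining elements. The soft-heap timing of a deletion need not match the true one. What I would try first is to prove equivalence via the interval/Hall-type characterization of deleted sets instead of by simulating the heap. I would show that removing a pair $(x,\delta)$ with $\delta$ after $x$'s insertion preserves, for every other element $y$, the existence of a "deficit interval" witnessing $y$'s deletion, provided $x$ is truly deleted. If this fails, the fallback is to remove only survivors, and to handle the many-deletions case with a single linear-time selection. That selection would find the rank threshold separating deleted from surviving elements, using the fact that under the characterization the deleted set is "downward closed" within windows.
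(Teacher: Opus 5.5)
\textbf{There is a genuine gap on the deletion side.} Your survivor side is fine: certifying uncorrupted soft-heap survivors and erasing their insertions is essentially the paper's argument. The deletion side is the step you flag yourself, and the routes you sketch do not close it.

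The statement you plan to prove is false. You want to show that removing \alg{insert}$(x)$ together with an \emph{arbitrary} \alg{delete-min} after it preserves everything else whenever $x$ is truly deleted. Take \alg{insert}$(1)$, \alg{delete-min}, \alg{insert}$(5)$, \alg{insert}$(2)$, \alg{delete-min}. Here $1$ is deleted and the final heap is $\{5\}$. Removing \alg{insert}$(1)$ and the \emph{last} \alg{delete-min} leaves $\{2,5\}$, because the first deletion becomes unsuccessful. The removed deletion must lie between the insertion of $x$ and its true deletion.

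The soft heap's own deletion time for $x$ is guaranteed to lie in that window only if $x$ was never corrupted in that run. In that case the rank-counting argument for survivors shows $x$ leaves the soft heap no later than its true deletion. But the soft heap bounds only the number of corrupted elements present at any one time, not the number ever corrupted. So you cannot guarantee that a constant fraction of the certified-deleted elements get a valid partner this way. Moreover, in the reversed run the certified elements never leave the heap, so there is no ``own'' deletion there at all.

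Your fallback also fails. The deleted set is not cut out by a rank threshold: in \alg{insert}$(5)$, \alg{delete-min}, \alg{insert}$(1)$ the larger element is deleted and the smaller one survives.

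\textbf{The missing idea is to pair $x$ with the \emph{first} \alg{delete-min} after \alg{insert}$(x)$.} This needs no timing information and lies in the window automatically, since $x$ is removed by some later deletion. Run both sequences in parallel. Until $x$ is deleted, the two heaps differ only in that one contains $x$ and the other contains some $z$ with $\key(z)\le\key(x)$. When $x$ is deleted the other heap deletes $z$, and from then on they coincide. For a set $Z$ of certified-deleted elements, apply this repeatedly. This is implemented in linear time by a forward stack scan that matches each \alg{delete-min} with the most recent unmatched insertion from $Z$.

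\textbf{Two smaller issues.}
\begin{itemize}
\item Your ``deficit interval'' characterization cannot be correct in any reading that looks only at operations from \alg{insert}$(x)$ onward, because it ignores smaller elements already present when $x$ arrives. In \alg{insert}$(1)$, \alg{insert}$(2)$, \alg{delete-min} the element $2$ survives. In \alg{insert}$(2)$, \alg{delete-min} it is deleted. Yet the operations from \alg{insert}$(2)$ on are identical.
\item The duality is only gestured at. The paper makes it concrete as a backward scan with a max-heap and a counter of open deletion slots, discarding the maximum whenever candidates exceed slots. It proves by induction over suffixes that this computes $\del(\emptyset,\seq)$, and then reuses the uncorrupted-element argument for a soft max-heap.
\end{itemize}
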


We introduce some basic notation.
For a nonempty set $X$ of elements with distinct keys, let $\min(X)$ be the element of $X$ with the smallest key and $\max(X)$ be the element of $X$ with the largest key.
If $H$ is a data structure storing a set of elements, let $\elems(H)$ denote the contents of $H$.
Given a set $X$ of elements and a sequence $\seq$ of heap operations,
let $\del(X, \seq)$ be the set of elements removed by all the $\alg{delete-min}$ operations while executing $\seq$ on a heap $H$ initially satisfying $\elems(H)=X$.
Finally, let $\alive(X, \seq)$ be $\elems(H)$ at the end of that execution. 
(Here, $\elems$ stands for \textbf{e}lements, $\del$ stands for \textbf{d}eleted elements, and $\alive$ stands for \textbf{r}emaining elements.)
In particular, \cref{thm:pq-set-main-thm} asks to compute $\alive(\emptyset, \seq)$.

\newcommand{\trivalg}{\alg{TrivialHeapEval}}

As a baseline, executing $\seq$ with an ordinary heap takes $\Oh(n \log n)$ time, where $n$ is the total number of insertions in $\seq$.\footnote{We may assume that $n \le |\seq| \le 2n$ holds. To achieve this, we initially identify all unsuccessful $\alg{delete-min}$ operations in $\seq$ and remove them from $\seq$ in linear time.}
Call this algorithm $\trivalg$.

The main tool in our improved algorithm is the soft heap of Chazelle~\cite{Chazelle2000}.
Soft heaps achieve constant amortized time for both $\alg{insert}$ and $\alg{delete-min}$ operations by allowing some of the keys to be \emph{corrupted}.

\begin{lemma}[Soft heap~\cite{Chazelle2000,Kaplan2013,Brodal2021}]\label{lm:soft-heap}
    For every parameter $\varepsilon \in (0, 1/2]$, there is a deterministic data structure that maintains a set of elements with comparable keys, supports the operations $\alg{insert}$ and $\alg{delete-min}$, and satisfies the following properties.
    After every operation, the data structure may increase the keys of some elements.
    The $\alg{delete-min}$ operation deletes an element whose current key is the smallest.
    Call an element \emph{corrupted} if its key was ever increased.
    The data structure guarantees that after every operation, the number of corrupted elements in the heap is bounded by $\varepsilon \cdot I$, where $I$ is the total number of insertions so far.
    Each operation takes amortized $\Oh(\log \tfrac{1}{\varepsilon})$ time.\lipicsEnd
\end{lemma}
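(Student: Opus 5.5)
This lemma is taken directly from the literature, so the plan is to reproduce the simplest known construction, the soft heap of Kaplan, Tarjan and Zwick~\cite{Kaplan2013}, and check the two claimed guarantees: a corruption bound of $\varepsilon \cdot I$ and amortized time $\Oh(\log\tfrac1\varepsilon)$. I fix the threshold $r = \lceil \log_2 \tfrac1\varepsilon\rceil + c$ for a small constant $c$.

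The structure is a collection of heap-ordered binary trees with ranks. Each node $x$ stores a list of items together with a common key $\mathrm{ckey}(x)$, which is at least the original key of every item in the list. An item counts as corrupted exactly when its original key is smaller than the ckey of the node holding it. The two operations work as follows.
\begin{itemize}
\item $\alg{insert}$ creates a rank-0 tree and links trees of equal rank, binomial-style.
\item $\alg{delete-min}$ picks the root of minimum ckey and removes one item from its list. If the list becomes empty, the node is refilled by a sift operation: the child with smaller ckey moves its list and ckey up, recursively, and a leaf whose list empties is pruned.
\item Corruption is introduced only at nodes of rank greater than $r$: there a sift is performed twice, and the two lists are concatenated under the larger ckey.
\end{itemize}

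The main obstacle is the corruption count. I would first prove by induction on rank that a node of rank $k \le r$ holds at most one item, while a node of rank $k > r$ holds at most roughly $2^{\lceil (k-r)/2\rceil}$ items, because list sizes at most double only at every other rank above $r$. Next I would bound the number of nodes of rank $k$ ever created by $I/2^{k}$, using the binomial-style linking. Summing (list-size bound)$\times$(number of rank-$k$ nodes) over $k > r$ gives a geometric series $\sum_{k>r} 2^{(k-r)/2+1}\, I/2^k = \Oh(I/2^r)$. Choosing $c$ large enough makes this at most $\varepsilon I$, and since only items in nodes of rank above $r$ can be corrupted, this bounds the corrupted items present at any moment. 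One subtlety needs care here: "number of corrupted elements in the heap" counts only items still present, so the bound must be phrased in terms of items currently residing in high-rank lists.

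For the running time I would use a potential function: a constant times the number of nodes, plus $r+1$ per inserted item, plus a term per root. With this potential, links are paid by insertions and each sift step is charged either to a decrease in the number of nodes or to the at most $\Oh(r)$ levels an item can descend-free travel before it is absorbed into a list. This gives $\Oh(1)$ amortized $\alg{delete-min}$ (excluding the sifts, which are paid for separately) and $\Oh(r) = \Oh(\log\tfrac1\varepsilon)$ amortized $\alg{insert}$; rebalancing the charges evens this out to $\Oh(\log\tfrac1\varepsilon)$ per operation. Determinism and comparison-basedness are immediate from the construction.

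Finally, I would check the invariant that $\alg{delete-min}$ always removes an item whose current key (its node's ckey) is minimal. This holds because the roots are heap-ordered by ckey and $\alg{delete-min}$ takes from the root of minimum ckey.
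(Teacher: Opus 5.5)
Your overall route is the paper's: the paper does not prove this lemma but imports it from Chazelle, Kaplan--Tarjan--Zwick and Brodal. Your corruption count is the standard KTZ argument: singleton lists up to rank $r$, at most $2^{\lceil (k-r)/2\rceil}$ items at rank $k>r$, at most $I/2^k$ rank-$k$ nodes ever created, and a geometric sum $\Oh(I/2^r)$. One correction is needed there. Your third bullet puts the double sift at \emph{every} node of rank $>r$, but your size bound needs it only at every other rank (e.g., odd ranks above $r$). With doubling at every rank above $r$, lists reach $2^{k-r}$ items and your sum becomes $(\log_2 I - r)\, I/2^r$, which is not $\Oh(\varepsilon I)$ uniformly in $I$.

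The genuine gap is the running time. Your charging pays for sift steps that move singleton lists (at most $r$ per item) and for steps that prune a leaf, but not for the rest. A refill at a node of rank $k$ triggers a chain of sift steps running down to a leaf, each step moving a possibly multi-item list up one level. The whole chain prunes only the one leaf at its bottom, yet it can have length up to $k$. An item can be carried up this way through all $\approx \log_2 I - r$ levels above $r$, so neither ``$\Oh(r)$ per item'' nor ``$\Oh(1)$ per deleted node'' pays for these steps.

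What is missing is the car-pooling argument. A full list at rank $j>r$ holds about $2^{(j-r)/2}$ items, so moving it one level costs about $2^{-(j-r)/2}$ per item, and these costs sum to $\Oh(1)$ over all levels. On top of that you must account for lists that are not full size, which arise when a subtree runs out and must be charged to pruned nodes or to a potential.

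There is a second omission. \alg{delete-min} must locate the root of minimum ckey among up to about $\log_2 I$ roots. Scanning them costs up to $\Theta(\log I)$ per operation, which already breaks the $\Oh(\log\tfrac1\varepsilon)$ bound. You need something like suffix-min pointers on the rank-ordered root list (as in Chazelle's implementation), together with an amortized bound on their $\Oh(\text{rank})$ update cost.

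As written, your sketch establishes the corruption guarantee but not the time bound. For this paper, citing the literature for the latter, as the paper does, is the appropriate resolution.
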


To clearly distinguish an ordinary heap from a soft heap, 
we sometimes refer to the former as an \emph{exact heap}.
Our algorithm for \cref{thm:pq-set-main-thm} repeatedly executes one of two reduction steps. The first works when at most half of the inserted elements are deleted (i.e., $|\del(\emptyset,\seq)| \le |\alive(\emptyset,\seq)|$); the second works when more than half of the inserted elements are deleted (i.e., $|\del(\emptyset,\seq)| > |\alive(\emptyset,\seq)|$). One of the two steps is always applicable and reduces the problem size by a constant factor in linear time.

\subparagraph{Handling few deletions.}
Our first algorithm replaces
the exact heap in \trivalg{} with a soft heap with parameter $\varepsilon = 1 / 4$.
We claim that if an uncorrupted element $x$ belongs to the soft heap $\tld{H}$ at the end of the execution, then $x$ would also belong to the exact heap after the same execution.
Intuitively, increasing keys may postpone deletions of corrupted elements but cannot make uncorrupted elements survive for longer.

\begin{lemma}\label{lm:soft-uncorrupted}
    Suppose that a sequence $\seq$ of heap operations is executed on an initially empty soft heap $\tld{H}$.
    Let $\tld{X}$ be the set of uncorrupted elements in $\tld{H}$ at the end of the execution.
    Then $\tld{X}$ is a subset of $\alive(\emptyset, \seq)$.
\end{lemma}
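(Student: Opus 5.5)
We argue by contradiction: fix an element $x \in \tld{X}$, so $x$ is never corrupted and still lies in $\tld{H}$ at the end. Suppose $x \notin \alive(\emptyset,\seq)$, i.e., $x$ is removed by some $\alg{delete-min}$ in the exact execution.

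The core observation is a monotonicity statement. Run both executions in parallel, and for each prefix $\seq_t$ of $\seq$ let $E_t=\elems(H)$ be the exact heap contents and $S_t=\elems(\tld{H})$ the soft heap contents. We want to show that for every uncorrupted element $y$ of $\tld{H}$, the number of elements of $E_t$ with true key at most $\key(y)$ is at most the number of elements of $S_t$ with true key at most $\key(y)$. A more convenient invariant is this: for every threshold $v$, the number of \emph{deleted} elements with true key $\le v$ is at least as large in the exact run as in the soft run. Write $\del_E^{\le v}(t)$ and $\del_S^{\le v}(t)$ for these counts. The invariant is proved by induction on $t$. Insertions affect neither count. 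At a $\alg{delete-min}$, the exact heap removes the true minimum $m$ of $E_t$, and so $\del_E^{\le v}$ increases for all $v\ge \key(m)$. The soft heap removes some element $z$, which increases $\del_S^{\le v}$ for $v\ge\key(z)$, where $\key(z)$ is the original key.

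The only dangerous case is $\key(z)<\key(m)$ while the counts were equal at some $v\in[\key(z),\key(m))$. Equality of deleted counts at $v$ means both heaps contain the same number of inserted elements with key $\le v$, since the same elements have been inserted. But $E_t$ contains none of them, because $m$ is the exact minimum. So $S_t$ contains none either, contradicting $z\in S_t$ with $\key(z)\le v$. Hence the invariant holds.

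Now apply the invariant to $x$. Consider the moment the exact heap deletes $x$: it is then the minimum of $E_t$, so every element with key $<\key(x)$ inserted so far has already been deleted in the exact run. Take $v=\key(x)$ right after this step, so that $\del_E^{\le v}$ equals the number of elements with key $\le \key(x)$ inserted so far. I expect the main obstacle to be turning the counting invariant into the conclusion that $x$ is deleted in the soft run, since counts alone do not identify which element was removed.

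The plan for this step is to strengthen the argument with the one fact that uncorrupted $x$ gives us. Its current key in $\tld{H}$ equals its true key, so whenever the soft heap deletes some $z$ while $x$ is present, the current key of $z$ is less than $\key(x)$, and hence so is its original key. Thus, while $x$ sits in $\tld{H}$, the soft heap only deletes elements with true key $<\key(x)$. By the invariant at threshold $\key(x)-$, just before the exact run deletes $x$, all earlier-inserted elements with key $<\key(x)$ have been deleted in the exact run, and the soft run has deleted at most that many of them. At the step where the exact run deletes $x$, the soft run must delete something. Since $x\in S_t$, the soft run deletes either $x$ or an element with key $<\key(x)$. The latter would make $\del_S^{<\key(x)}$ exceed the total number of such inserted elements minus those still present, and I would then check that this forces $\del_S^{\le \key(x)} > \del_E^{\le \key(x)}$ at the appropriate moment, contradicting the invariant. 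Hence $x$ is deleted in the soft run, contradicting $x\in\tld{X}$.
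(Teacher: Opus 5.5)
Your invariant on deleted counts is true, but it cannot carry the proof, because it holds for \emph{any} rule of choosing which element to delete. The exact heap greedily minimizes, for every threshold $v$ at once, the number of remaining elements of key at most $v$. So the invariant only says that the soft heap retains \emph{at least} as many elements of true key $\le v$ as the exact heap. Your contradiction needs the opposite: an \emph{upper} bound on what the soft heap retains below $\key(x)$.

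The only soft-heap-specific input you add is that, while $x$ is present, every soft deletion removes an element of true key $<\key(x)$. Neither ingredient uses that corruption is permanent (keys only increase). For a structure whose raised keys may later drop again, the conclusion is false. Take $\key(a_i)=i$ and the sequence insert $a_1$, insert $a_3$, \alg{delete-min}, insert $a_2$, \alg{delete-min}, so that $\alive(\emptyset,\seq)=\{a_3\}$. Suppose a structure raises $a_1$'s key above $3$ and deletes $a_3$ (then the current minimum). It then restores $a_1$'s key and deletes $a_1$ (again the current minimum). This run satisfies both your invariant and your observation. Yet it ends with the never-corrupted element $a_2\notin\alive(\emptyset,\seq)$.

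So the step you defer (``I would then check\dots'') cannot be done. When the exact run deletes $x$, the soft run deleting another element of key $<\key(x)$ contradicts none of your counts. Indeed, $\del_S^{<\key(x)}$ is \emph{identically} the number of such inserted elements minus those still present, so it can never exceed it.

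The missing idea is to count the soft heap by \emph{current} keys. With $a=\key(x)$, let $N_a$ be the number of elements of key $\le a$ in the exact heap. Let $\tld N_a$ be the number of elements $y$ of $\tld H$ with $\tldkey(y)\le a$. Then $\tld N_a\le N_a$ holds throughout:
insertions change both counts by the same amount;
corruptions can only decrease $\tld N_a$;
a \alg{delete-min} decrements each count if and only if it is positive.
The corruption case is exactly where monotonicity of keys enters: an element skipped because its key was raised above $a$ never comes back below $a$.

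Right after the exact heap deletes $x$, we have $N_a=0$, hence $\tld N_a=0$. This is impossible if $x$ is still in $\tld H$ with current key $a$. This is the paper's argument. It also directly gives the fact you were aiming for: the soft heap must delete $x$ at that very step.
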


\begin{proof}
    Fix some $x \in \tld X$ and let $a \coloneqq \key(x)$.
    Execute $\seq$ in parallel on a soft heap $\tld{H}$ and an exact heap $H$.
    For an element $y \in \elems(\tld H)$, let $\tldkey(y)$ denote the current key of $y$ in $\tld H$.
    At every point in time,
    let $N_a$ be the current rank of $a$ in $H$, and let $\tld N_a$ be the current rank of $a$ in $\tld H$. 
    Formally, let
    \[N_a \coloneqq |\setof{y \in \elems(H)}{\key(y) \le a}| \quad \quad
    \text{and}
    \quad \quad \tld N_a \coloneqq |\setof{y \in \elems(\tld H)}{\tldkey(y) \le a}|.\]
    We claim that $\tld N_a \le N_a$ holds at all times.
    Initially, both values are equal to zero.
    Upon an insertion, both change by the same amount.
    Key corruptions can only reduce $\tld N_a$.
    Upon a deletion, each of the two values is decremented by $1$ if and only if the respective value is strictly positive. Thus, $\tld N_a \le N_a$ is maintained.

    Now suppose that $x \notin \alive(\emptyset, \seq)$.
    Consider the operation that deletes $x$ from $H$.
    Right after this operation, $N_a = 0$ since $x$ was the element with the smallest key in $H$ and all keys in $H$ are distinct.
    Consequently, $\tld N_a = 0$, so either $x \notin \elems(\tld H)$ or $x$ is corrupted.
    This contradicts the fact that $x$ belongs to $\tld X$, finishing the proof.
\end{proof}

We now use \cref{lm:soft-uncorrupted} 
to reduce the problem size if there are \emph{few deletions}.
Indeed, after executing $\seq$ on a soft heap $\tld H$, we may add the set $U$ of uncorrupted elements in $\tld{H}$ to the answer (i.e., the set of remaining elements), remove the insertions of these elements from $\seq$, and recursively solve the problem for the new sequence $\seq'$.
Since clearly $\del(\emptyset, \seq) = \del(\emptyset, \seq')$, we have $\alive(\emptyset,\seq) = \alive(\emptyset,\seq') \cup U$, and we have reduced the problem from $\seq$ to $\seq'$. Note that removing the insertions corresponding to the set $U$ can be done with simple bookkeeping in linear time. 

Let $n$ be the number of insertions.
With the error parameter $\varepsilon = 1/4$, at most $n/4$ elements remaining in the soft heap at the end are corrupted. If the number of deletions is at most $n/2$, then at least $n/2$ elements are still present in $\tld H$ at the end of $\seq$, and at least $n/4$ of them are uncorrupted.
Recall that $n \le |\seq| \le 2n$, so we have reduced the problem size by a constant factor.
Next, we discuss how to handle the case with more than $n/2$ deletions.

\subparagraph{Handling many deletions.}

\SetKwFunction{revalg}{ReverseHeapEval}

The above algorithm works by maintaining an approximation of $\alive(\emptyset, \seq)$ in a soft heap.
The idea for the remaining case is simple and in a sense ``dual'' to the first case: maintain an approximation of $\del(\emptyset, \seq)$. 
We do so by scanning $\seq$ backwards, using a max-heap instead of a min-heap.
Intuitively, every $\alg{delete-min}$ creates a deletion ``slot'', while every \alg{insert} provides a candidate for one of the currently open slots. 
If there are more candidates than slots, the largest candidate will survive the processed suffix.
Therefore, such a maximum candidate can be discarded.
The pseudocode for this algorithm, \revalg, is given in \Cref{alg:reversed}. 
 Since the algorithm is slightly more complex than the previous one, 
we first analyze it with an exact heap.

\begin{algorithm} \caption{A heap-based algorithm for computing $\del(\emptyset, \seq)$. The algorithm reverses the timeline and maintains the set of deleted elements. 
}\label{alg:reversed}
\revalg{$\seq$} \Begin{
    $H \gets \alg{MaxHeap}()$\;
    $\delta \gets 0$~~\tcp{Number of deletion slots in the processed suffix}
    \For{$\sigma \in \alg{reversed}(\seq)$}{
        \If{$\sigma = \alg{insert}(x)$}{
            $H.\alg{insert}(x)$\;
            \If{$|H| > \delta$}{\label{line:reversed-if}
                $H.\alg{delete-max}()$\;\label{alg-line:rev-delmax}
            }
        }\ElseIf{$\sigma = \alg{delete-min}$}{
            $\delta \gets \delta + 1$\;
        }
    }
    \Return{$\elems(H)$}\;
}
\end{algorithm}

\begin{lemma}\label{lm:rev-algo-correct}
    \revalg, when implemented with an exact max-heap, computes $\del(\emptyset, \seq)$.
\end{lemma}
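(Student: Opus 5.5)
We prove the stronger, suffix-based invariant by induction on the number of processed operations. For a suffix $\seq_{\ge i} = (\sigma_i,\dots,\sigma_n)$ of $\seq$, let $I_i$ be the set of elements inserted in $\seq_{\ge i}$ and $\delta_i$ the number of $\alg{delete-min}$ operations in $\seq_{\ge i}$. Recall that we may assume all $\alg{delete-min}$ operations in $\seq$ are successful; we will not need more. The invariant we aim for is: after processing $\sigma_i,\dots,\sigma_n$ (in reverse), the heap $H$ contains exactly $\del(\emptyset,\seq_{\ge i})$, i.e.\ the elements of $I_i$ deleted when $\seq_{\ge i}$ is executed on an initially empty heap. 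Moreover, we will show in passing that $|H| = \min(\delta_i, |I_i|)$, so that the \textbf{if} in line~\ref{line:reversed-if} fires exactly when a new insertion exceeds the number of slots. At $i=1$ this gives the lemma.

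The key structural fact I would establish first is a characterization of how prepending an operation changes the deleted set. Prepending a $\alg{delete-min}$ to $\seq_{\ge i+1}$ executed from an empty heap adds an unsuccessful deletion, so the deleted set is unchanged while $\delta$ grows; the algorithm likewise leaves $H$ unchanged and increments $\delta$. Prepending $\alg{insert}(x)$ is the real case. I would compare executing $\seq_{\ge i+1}$ from the empty heap versus from the heap $\{x\}$, and show: if $\seq_{\ge i+1}$ from empty had an unsuccessful deletion (equivalently $|\del(\emptyset,\seq_{\ge i+1})| < \delta_{i+1}$), then $\del(\{x\},\seq_{\ge i+1}) = \del(\emptyset,\seq_{\ge i+1}) \cup \{x\}$; otherwise $\del(\{x\},\seq_{\ge i+1}) = \bigl(\del(\emptyset,\seq_{\ge i+1}) \cup \{x\}\bigr) \setminus \{\max\}$, where $\max$ is the maximum of that union. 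This matches the algorithm: it inserts $x$ and removes the maximum iff $|H|>\delta$.

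To prove the characterization, I would use a simple exchange or monotonicity argument: run both executions in parallel. The two heaps differ by at most one element, $x$ or a replacement, and the extra element is always the largest relevant one, or the difference vanishes once an unsuccessful deletion in the empty-start run is absorbed by the extra element. Alternatively, and perhaps more cleanly, I would use the threshold characterization that an element $y$ is deleted iff for some prefix-time window the count of smaller elements does not suffice. I expect this invariant bookkeeping, namely tracking the single "extra" element and showing it is the maximum of the final deleted set when it is pushed out, to be the main obstacle. I would handle it by a sub-induction over $\seq_{\ge i+1}$ with the invariant that the two heap states are either equal, or differ by exactly one element $z$ (present only in the $\{x\}$-run) with every element deleted so far in the $\{x\}$-run but not the other run being $\le$ all elements deleted in both, etc. Since keys are distinct, ties do not arise.
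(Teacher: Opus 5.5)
Your route is the paper's: induction over suffixes with $\elems(H)=\del(\emptyset,\seq_{\ge i})$, a parallel run of the suffix from $\emptyset$ and from $\set{x}$, and a case split on whether the empty-start run has an unsuccessful deletion. Your two-case characterization of $\del(\set{x},\seq_{\ge i+1})$ is also correct. However, the invariant $|H|=\min(\delta_i,|I_i|)$, which you use to justify that the test $|H|>\delta$ fires in the right case, is false.

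Take the suffix $(\alg{delete-min},\alg{insert}(a))$. Here $\delta_i=|I_i|=1$, but the deletion is unsuccessful, so $\del(\emptyset,\cdot)=\emptyset$, and the algorithm indeed ends with $H=\emptyset$. Assuming that $\seq$ itself has no unsuccessful deletions does not help, because $\seq=(\alg{insert}(b),\alg{delete-min},\alg{insert}(a))$ has this suffix. Relying on the formula there would put the prepended $\alg{insert}(b)$ into your eviction case, whereas $b$ must be kept.

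The correct fact, which is what the paper uses, follows from your main invariant alone. $|H|=|\del(\emptyset,\seq_{\ge i+1})|$ counts the successful deletions, so $|H|\le\delta_{i+1}$, with equality iff none is unsuccessful. Hence after inserting $x$ the test succeeds exactly in your second case.

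Your sketched sub-induction invariant is also false as written: you claim every element deleted only in the $\set{x}$-run is $\le$ all elements deleted in both runs. Let $z_1<x<z_2$ and take the suffix $(\alg{insert}(z_1),\alg{delete-min},\alg{insert}(z_2),\alg{delete-min})$. The $\set{x}$-run deletes $z_1$ and then $x$, so $x$ is deleted only there, while $z_1<x$ is deleted in both.

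What you need is essentially the paper's simpler statement. While the runs differ, the $\set{x}$-heap equals the empty-start heap plus one extra element $y$, where $y=\max(D\cup\set{x})$ and $D$ is the set deleted so far by the empty-start run. This is maintained because, when the empty-start run deletes $z$, the other run deletes $\min\set{y,z}$ and keeps $\max\set{y,z}$. At the first unsuccessful deletion, the $\set{x}$-run deletes $y$ and the two runs coincide from then on. Both branches of your characterization follow at once.
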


\newcommand{\sufseq}{\mathcal{T}}

\begin{proof}
    We prove by induction over the suffixes $\sufseq$ of $\seq$ that after processing $\sufseq$, the invariant $\elems(H) = \del(\emptyset, \sufseq)$ holds.
    Observe that 
    $\delta$ is the number of $\alg{delete-min}$ operations in $\sufseq$.
    The invariant is trivial for an empty suffix.
    Now suppose that the invariant holds for a suffix $\sufseq$, and let $\sigma$ be the operation immediately preceding $\sufseq$.
    If $\sigma = \alg{delete-min}$, then $\sigma$ is unsuccessful when $\sigma \sufseq$ is executed from an empty heap.
    Thus, prepending $\sigma$ to $\sufseq$ does not change the deleted set.

    Now suppose that $\sigma = \alg{insert}(x)$.
    Executing $\sigma \sufseq$ from an empty heap is equivalent to executing $\sufseq$ from a heap initially containing only $x$.
    Compare executions of $\sufseq$ starting from $\emptyset$ and from $\set{x}$ in parallel.
    Observe that at all times either the heaps are identical or the second heap contains all the elements from the first heap together with one extra element $y$, where initially $y = x$.
    Insertions preserve this property.
    If the first heap deletes $z$, the second heap deletes $\min\set{y, z}$, leaving $\max\set{y, z}$ as its new extra element.
    If the first heap encounters an unsuccessful deletion, the second heap deletes its extra element, and after that the two executions are identical.

    Consider the state of $H$ after processing $\sufseq$, before processing $\sigma$, and recall that $\elems(H) = \del(\emptyset, \sufseq)$ by the induction hypothesis. If $|H| < \delta$, then an unsuccessful deletion occurs (since $\delta$ counts all deletions while $|H|$ only counts the successful ones).
    Therefore, both heaps end in the same state.
    Consequently, $\elems(H) \cup \set{x} = \del(\emptyset, \sigma \sufseq)$. 
    Note that the check $|H| > \delta$ in \cref{alg:reversed}, line~\ref{line:reversed-if} fails in this case, so $x$ gets correctly inserted into $H$.
    
    Otherwise, if $|H| = \delta$, every deletion in the first heap is successful, and the extra element of the second heap at the end of execution is $\max(\elems(H) \cup \set{x})$ because every deletion replaces the current extra element $y$ with $\max\set{y, z}$.
    Thus, the deleted set is $(\elems(H) \cup \set{x}) \setminus \set{\max(\elems(H) \cup \set{x})}$.
    Observe that the check in \cref{alg:reversed}, line~\ref{line:reversed-if} succeeds in this case, so the algorithm is again correct.

    The invariant therefore holds for every suffix.
    Taking $\sufseq = \seq$, \revalg returns $\del(\emptyset, \seq)$ as claimed.
\end{proof}

Adapting \cref{lm:soft-uncorrupted}, after running \revalg with a soft max-heap $\tld{H}$ instead of an exact heap, the set $\tld X$ of uncorrupted elements in $\tld{H}$ is a subset of $\del(\emptyset, \seq)$.\footnote{In a soft max-heap, keys of corrupted elements decrease rather than increase.}
We can use this information to reduce the problem size.
Simply removing all insertions of elements $x \in \tld X$ would not work; we also need to remove the corresponding $\alg{delete-min}$ operations.
Identifying the precise deletion that removes each element, however, would reintroduce the sorting barrier.
The following lemma shows that it is sufficient to remove the first $\alg{delete-min}$ after the insertion of $x$.

\begin{lemma}\label{lem25}
    Let $x$ be an element of $\del(\emptyset, \seq)$, and let $\sigma$ be the first $\alg{delete-min}$ operation after the insertion of $x$.
    Let $\seq'$ be the sequence of operations obtained by removing $\alg{insert}(x)$ and $\sigma$ from~$\seq$.
    Then $\alive(\emptyset, \seq) = \alive(\emptyset, \seq')$.
\end{lemma}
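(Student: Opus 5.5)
We prove the statement by a parallel-execution (coupling) argument, in the style of the proofs of \cref{lm:soft-uncorrupted} and \cref{lm:rev-algo-correct}. Write $\seq = \seq_1\,\alg{insert}(x)\,\seq_2\,\sigma\,\seq_3$, where $\seq_2$ consists only of insertions. We may assume $\seq$ has no unsuccessful deletions, since these can be removed beforehand without changing $\alive$ or $\del$. We run $\seq$ on an exact heap $H$ and $\seq'$ on an exact heap $H'$ in lockstep, aligning every operation of $\seq'$ with its copy in $\seq$; the two removed operations are steps where only $H$ acts.

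\textbf{Invariant.} Up to $\alg{insert}(x)$ the heaps are identical. From then until just before $\sigma$, $\elems(H) = \elems(H') \cup \set{x}$, because only insertions occur in $\seq_2$. At $\sigma$, $H$ deletes $m = \min(\elems(H))$ while $H'$ does nothing. If $m = x$, the heaps become identical and remain so, which proves the claim. Otherwise $x$ is still in $H$, and $H$ has lost an element $m$ that $H'$ still contains. We then maintain the following invariant: either the heaps are identical, or
\[
\elems(H) = \bigl(\elems(H') \setminus \set{z}\bigr) \cup \set{x}
\]
for some element $z$ with $\key(z) < \key(x)$. Initially $z = m$, and $\key(m) < \key(x)$ since $m$ was the minimum of a set containing $x$.

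\textbf{Maintaining the invariant.} Insertions preserve it trivially. Consider a deletion while the heaps differ, and let $w$ be the minimum of the common part $C = \elems(H) \setminus \set{x} = \elems(H') \setminus \set{z}$. Note that $C$ can be empty; the assumption of no unsuccessful deletions handles this case in $H$. We check four cases.
\begin{itemize}
\item If $x$ is the minimum of $H$, then $H$ deletes $x$. Since $\key(z) < \key(x)$, $H'$ deletes $z$, and the heaps become identical.
\item If $w$ is the minimum of $H$, and $\key(z) < \key(w)$, then $H$ deletes $w$ and $H'$ deletes $z$. The new discrepancy is $z := w$, and it still satisfies $\key(w) < \key(x)$.
\item If $w$ is the minimum of $H$, and $\key(w) < \key(z)$, then both heaps delete $w$ and the invariant is unchanged.
\item If $H'$ is nonempty but $H$ is empty, this is impossible, since $|H| = |H'|$ whenever the heaps differ.
\end{itemize}

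\textbf{Conclusion.} At the end of the execution, either the heaps are identical, which is what we want, or $H$ still contains $x$. The latter contradicts $x \in \del(\emptyset,\seq)$. Hence $\alive(\emptyset,\seq) = \alive(\emptyset,\seq')$.

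The main obstacle is choosing the right invariant after $\sigma$. We need the key inequality $\key(z) < \key(x)$ to persist, since this is what forces the two discrepancies to resolve simultaneously when $x$ is finally deleted. We also need the fact that $\seq_2$ contains only insertions, because it pins down the state at $\sigma$.
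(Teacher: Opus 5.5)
Your proof is correct and follows the paper's argument: you couple the executions of $\seq$ and $\seq'$ in lockstep and maintain the same invariant, namely that the heaps differ only by $x$ in $H$ versus some $z$ with smaller key in $H'$; you use the same update $z \leftarrow w$ when $H$ deletes $w \neq x$ and $H'$ deletes $z$, and conclude in the same way that the discrepancy disappears exactly when $x$ is deleted. The preliminary removal of unsuccessful deletions is superfluous and can be dropped: while the heaps differ, both are nonempty because they contain $x$ and $z$, and as stated your reduction would also need a check that those deletions are unsuccessful in $\seq'$.
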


\begin{proof}
    Consider two initially empty exact heaps $H$ and $H'$.
    Run the first heap on $\seq$ and the second one on $\seq'$ in parallel.
    (When one of the two omitted operations occurs, the second heap does nothing while the first one executes the operation.)
    Let $y$ be the element that $\sigma$ deletes in $H$.
    Since $x$ has not been deleted yet, we have $\key(y) \le \key(x)$. 
    
    If $y = x$, the claim is immediate.
    Otherwise, we prove that until $H$ deletes $x$, the two heaps differ only in that $H$ contains $x$ while $H'$ contains some element $z$ with $\key(z) \le \key(x)$.
    
    Recall that $\key(y) \le \key(x)$, so the property indeed holds immediately after $\sigma$.
    Insertions clearly preserve this property, as do deletions in which both heaps delete the same element.
    The only other case is where $H'$ deletes $z$ while $H$ deletes some $z' \neq x$ with $\key(z') < \key(x)$, in which case $z'$ becomes the new $z$.
    
    Finally, when the first heap deletes $x$, the second heap must delete $z$ (since all other elements are common), and from this moment on the heaps remain identical.
    Hence, at the end of the execution we have $\elems(H) = \elems(H')$.
\end{proof}

Lemma~\ref{lem25} allows us to remove a single element of $\del(\emptyset, \seq)$.
If a subset $Z$ of $\del(\emptyset, \seq)$ needs to be removed, the argument can be applied repeatedly.
Algorithmically, this can be implemented in $\Oh(|\seq|)$ time by running a linear scan with a stack over $\seq$ and matching each deletion with the last unmatched insertion from $Z$.

Let us summarize the entire step of handling \emph{many deletions}. Suppose the operation sequence $\seq$ with $n$ insertions contains more than $n/2$ successful deletions, and suppose we execute \revalg{$\seq$} with a soft max-heap with error parameter $\varepsilon = 1/4$. Then, at the end of the execution, the soft heap contains at least $n/4$ uncorrupted elements, all of which belong to $\del(\emptyset,\seq)$. By the previous discussion, we can remove the insertions of these elements, together with their corresponding deletions, in linear time, reducing the problem from $\seq$ to $\seq'$, thereby reducing the problem size by a constant factor. 

\subparagraph{The heap evaluation algorithm.}

We can now state the complete linear-time heap evaluation procedure (\cref{alg:final}).
The correctness of the algorithm follows from the discussion above.
Note that every iteration of the while-loop takes $\Oh(n)$ time, and after each iteration, $n$ decreases by a factor of at least $4 / 3$.
Summing the geometric series yields $\Oh(n)$ total time, proving \cref{thm:pq-set-main-thm}.

\SetKwFunction{linalg}{LinearTimeHeapEval}

\newcommand{\ans}{A}

\begin{algorithm} \caption{Linear-time algorithm for the heap evaluation problem.}\label{alg:final}
\linalg{$\seq$} \Begin{
    Remove unsuccessful deletions from $\seq$\;
    $\ans \gets \emptyset$
    \tcp{Accumulating answer}
    \While{$\seq$ is nonempty}{
        $n \gets$ number of insertions in $\seq$\;
        $\delta \gets$ number of deletions in $\seq$\;
        \If{$\delta \le n / 2$}{
            $\tld{H}_{\min} \gets \alg{SoftMinHeap}()$ with $\varepsilon = 1 / 4$\;
            Run $\trivalg(\seq)$ with $\tld{H}_{\min}$ instead of the min-heap\;
            $U \gets$ the set of uncorrupted elements in $\tld{H}_{\min}$\;
            $\ans \gets \ans \cup U$\;
            Remove insertions of elements in $U$ from $\seq$\;
        }\Else{
            $\tld{H}_{\max} \gets \alg{SoftMaxHeap}()$ with $\varepsilon = 1 / 4$\;
            Run $\revalg(\seq)$ 
            with $\tld{H}_{\max}$ instead of the max-heap\;
            $U \gets$ the set of uncorrupted elements in $\tld{H}_{\max}$\;
            Remove insertions of elements in $U$ and corresponding deletions from $\seq$\;
        }
    }
    \Return{$\ans$}\;
}
\end{algorithm}

\subsection{Maximum-Profit Unit-Time Scheduling}\label{sec:scheduling}

As a corollary of \cref{thm:pq-set-main-thm}, we give a linear-time algorithm for the single-machine \emph{maximum-profit unit-time scheduling} (\textsc{mputs}) problem.
In this problem, we have $n$ unit-time jobs released at time $0$, each job $j$ with an integer due date $d_j$ and a nonnegative profit $w_j$.
The task is to schedule the execution of the jobs on a single machine, without overlap, maximizing the sum of profits of jobs completed by their due dates.\footnote{An alternative interpretation of the problem is the following: given a bipartite graph in which each vertex on the left is connected to a prefix of vertices on the right, the task is to compute a matching that maximizes the total weight of the chosen vertices on the left.}
In the standard scheduling notation, this problem is $1 \mid p_j = 1 \mid \sum_j w_j U_j$.
We show the following result.

\begin{theorem}\label{thm:scheduling}
    The \textsc{mputs} problem with $n$ jobs can be solved in $\Oh(n)$ time. 
    \lipicsEnd
\end{theorem}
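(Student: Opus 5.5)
We reduce \textsc{mputs} to a single instance of heap evaluation and then invoke \cref{thm:pq-set-main-thm}. Let $D=\max(n,\cdot)$ be handled first: replace every due date $d_j$ by $\min(d_j,n)$, since no schedule uses more than $n$ slots. This leaves an equivalent instance. Discard jobs with $d_j\le 0$. Now bucket the jobs by due date in $\Oh(n)$ time with counting sort, since due dates lie in $\{1,\dots,n\}$. Keys are the profits $w_j$, with ties broken by job index as in the footnote to \cref{thm:pq-set-main-thm}.

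The classical greedy algorithm (Lawler; see also the textbook matroid argument) scans due dates $t=1,\dots,n$ in increasing order. For each $t$ it inserts all jobs with $d_j=t$ into a min-heap keyed by profit. It then deletes the minimum repeatedly while the heap size exceeds $t$. The heap content at the end is an optimal set of on-time jobs. Any such set can be scheduled in earliest-due-date order, and the sequencing itself takes linear time via the due-date buckets. The key observation is that this algorithm never uses the identities of deleted elements. Its only data-dependent decision is ``heap size exceeds $t$'', and the heap size is determined by counts alone.

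So I would precompute the operation sequence obliviously. Let $c_t$ be the number of jobs with due date $t$, let $s_{t-1}$ be the size before step $t$, and set $s_t=\min(t,s_{t-1}+c_t)$. Step $t$ emits $c_t$ insertions followed by $\max(0,s_{t-1}+c_t-t)$ silent \alg{delete-min} operations. The total length is $\Oh(n)$ because each deletion removes a previously inserted element. Applying \cref{thm:pq-set-main-thm} to this sequence returns the surviving set $\alive(\emptyset,\seq)$ in $\Oh(n)$ time, and these survivors are exactly the jobs chosen by the greedy algorithm.

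The main obstacle is justifying correctness of the greedy algorithm in the form used. I would argue this by an exchange argument. The sets of jobs feasible to complete by their due dates form a transversal matroid, so feasibility holds iff for every $t$ at most $t$ chosen jobs have due date $\le t$. Maintaining the invariant ``the heap holds a maximum-weight feasible subset among jobs with due date $\le t$'' then follows inductively. After adding the new jobs, the only violated constraint is at $t$, and dropping the lightest elements is optimal because in a matroid, restricting an optimal basis preserves optimality. This argument is standard, and I would cite it from the textbooks~\cite{Cormen2009} rather than reprove it.
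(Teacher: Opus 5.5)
Your proposal is correct and matches the paper's proof: record the insert/\alg{delete-min} sequence of Lawler's greedy, which depends only on due-date counts and not on which elements get deleted, and evaluate that sequence with \cref{thm:pq-set-main-thm}. Batching per due date (all insertions for date $t$, then the deletions) instead of per job only changes the presentation, since it yields the same surviving set. Your matroid sketch spells out the greedy's correctness, which the paper simply cites to Lawler.
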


The problem appears in many classical textbooks on algorithms and scheduling with $\Oh(n \log n)$-time solutions; see~\cite[\S\,16.5]{Cormen2009},~\cite[\S\,4.4.2]{Brucker2007},~\cite[\S\,4.4]{Horowitz2008}.
The following algorithm, originally due to Lawler~\cite{Lawler1976}, correctly solves the problem in $\Oh(n\log{n})$ time:

Cap all due dates by $n$ and 
process the jobs in nondecreasing order of their due dates. Maintain a min-heap $H$, initially empty. When job $j$ is processed, insert $j$ into $H$.
If the number of jobs stored in $H$ is greater than $d_j$, 
remove the minimum-profit job from $H$.
At the end of the algorithm, the jobs remaining in $H$ are 
a maximum-profit feasible set of jobs.
Furthermore, ordering the jobs in $H$ by nondecreasing due date gives a valid execution schedule.

Note that processing the jobs by their due dates does not imply a sorting bottleneck, as these values are integers between $1$ and $n$; 
the superlinear running time is only due to the operations on heap $H$, which stores jobs with their profit as the key. 
Lawler's algorithm is, in fact, similar to \revalg (\cref{alg:reversed}) when implemented with an exact heap, and can be seen to solve an instance of the heap evaluation problem. 

We use this observation to speed up the algorithm, as follows. Simulate the steps of the algorithm, but instead of carrying out the heap operations \alg{insert} and \alg{delete-min} on $H$ in real time, record them as a sequence $\seq$. Notice that the operations do not depend on the outcomes of past operations, only on the job due dates and on the number of elements currently in the heap, which we can simulate with a counter. Running the simulation clearly takes $\Oh(n)$ time. 
Finally, after obtaining the heap operation sequence $\seq$ of length at most $2n$, we use \cref{thm:pq-set-main-thm} to evaluate the final state of the heap in $\Oh(n)$ time. This implies \cref{thm:scheduling}.

To the best of our knowledge, this is the first linear-time algorithm for \textsc{mputs} with arbitrary comparable profits and integer due dates, without assuming that the profits are given in sorted order. 
We observe that the single-machine assumption is without loss of generality: the variant with $k$ machines is equivalent to the problem with one machine and processing time $1/k$ for each job.

\section{The Sync Heap}\label{sec:batch}

In this section, we turn the offline algorithm of \seccref{sec:pq-set} into an online data structure we call the \emph{sync heap}.
We support operations $\alg{insert}$, silent $\alg{delete-min}$, $\alg{find-min}$, and $\alg{reveal-deletions}$. 
As before, we call the first two operations \emph{modifications} and the other two \emph{observations}.
The silent $\alg{delete-min}$ removes the current minimum without revealing it. The operation
$\alg{reveal-deletions}$ returns, in arbitrary order, all elements deleted by $\alg{delete-min}$ operations since the last $\alg{reveal-deletions}$ (or since the beginning of the execution if this is the first call of $\alg{reveal-deletions}$).
We analyze the optimal time complexity in terms of the total number $n$ of operations and the number $k$ of observations ($\alg{find-min}$ and $\alg{reveal-deletions}$).
Building on the results of \seccref{sec:pq-set}, we prove the following theorem.

\begin{theorem}\label{thm:findmin-ub}
    There is a deterministic data structure, the sync heap, that supports the operations $\alg{insert}$, $\alg{find-min}$, and $\alg{reveal-deletions}$ in constant amortized time and silent $\alg{delete-min}$ operations in~$\Oh(\log k)$ amortized time, where $k$ is the number of $\alg{find-min}$ and $\alg{reveal-deletions}$ operations up to the current moment. 
    \lipicsEnd
\end{theorem}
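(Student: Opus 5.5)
The plan is to maintain the heap state in a selectable heap~\cite{Sandlund2022} $S$ (constant amortized \alg{insert}, and deletion of the $r$ smallest elements in $\Oh(r\log(|S|/r))$ or similar), together with a buffer $B$ of pending modifications recorded since the last observation. Modifications are appended to $B$ in $\Oh(1)$ time each, and all real work happens at the next observation (a ``sync''). To pay the $\Oh(\log k)$ per deletion, each \alg{delete-min} is charged $c\log k$ credits when issued. Since $k$ only grows, the charge computed with the current $k$ is a valid lower bound for any later use.

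At a sync, let $B$ contain $m$ operations, among them $\delta$ successful deletions. I would first apply the reduction of \seccref{sec:pq-set} to $B$, relative to the current contents $X=\elems(S)$, to transform the interleaved batch into an equivalent \emph{separated} form: some insertions that are certainly deleted within the batch, some that certainly survive, and $\delta'$ deletions that reach into $X$. Concretely, run \cref{alg:final} on $B$ viewed as an instance from the empty heap, with the modification that a \alg{delete-min} finding the buffer-empty heap is not discarded but becomes a deletion of $\min(X)$. This yields the set $\del$ of batch-inserted elements deleted by batch deletions, the surviving batch elements $\alive$, and the count $\delta'$ of deletions directed at $X$.

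The correct semantics requires care, because a batch element $x$ may be larger than some elements of $X$. Comparing interleavings, the combined effect of the batch equals ``insert all batch elements into $S$, then delete the $\delta$ smallest'' only when no unsuccessful-from-$X$ deletion occurs. In general this fails, since a deletion happening before an insertion cannot delete it. My plan is to use the identity that executing $B$ on $X$ is equivalent to executing $B$ on $\min$-prefixes of $X$: the elements of $X$ act as if inserted at the start. So, by generalizing \cref{lem25} and \cref{lm:rev-algo-correct} to a nonempty initial set, the final state is $\alive(X,B)$. I would compute it as follows. Run the linear-time evaluation on $B$ from empty to get $\alive(\emptyset,B)$, $\del(\emptyset,B)$, and the number $u$ of unsuccessful deletions. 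Then the true deleted set is obtained by deleting $u$ elements from $X\cup$(certain candidates), which reduces to extracting the $u$ smallest from $X$ and then resolving interactions. Here I would prove a lemma that $\alive(X,B)$ equals $\alive(\emptyset,B)$ together with $X$ minus its $u$ smallest, adjusted by a ``swap'' chain as in the proof of \cref{lm:rev-algo-correct}. Resolving that swap efficiently amounts to one more selection over $\Oh(\delta)$ candidates.

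The extraction of the $u\le\delta$ smallest elements from $S$ uses the selectable heap. The cost is $\Oh(u\log(n/u))$ in the worst case, which must be bounded by $\Oh(\delta\log k)$ amortized. This is the main obstacle: a sync with few deletions on a huge heap is costly relative to $\log k$. To handle it, I would argue via the entropy/multiple-selection accounting of \cite{Dobkin1981}. Summed over all syncs, the selections behave like a multiple selection with $k$ cut points, costing $\Oh(n+d\log k)$. Alternatively, I would keep $S$ partitioned into $\Oh(k)$ buckets of geometrically arranged ranks so that each extraction descends only through buckets. The \alg{find-min} itself is answered after syncing by reading the minimum of $S$, and \alg{reveal-deletions} returns the accumulated deleted lists. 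Both are $\Oh(1)$ beyond the sync cost, which is paid by the buffered operations' credits.
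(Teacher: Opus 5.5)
\textbf{Verdict: there is a genuine gap.} Your overall architecture (buffer the modifications, keep the state in a selectable heap, evaluate each batch at the next observation) matches the paper's, but the synchronization rule is wrong and the cost bound is asserted rather than proved.

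\textbf{The synchronization rule.} Your concrete rule is to evaluate the buffer $B$ from the empty heap and redirect each deletion that finds the batch part empty to $\min(X)$. This is incorrect, because elements of $X$ compete with batch elements at every deletion, not only when the batch part is empty. Take $X=\{5\}$ and $B=(\alg{insert}(10),\alg{delete-min},\alg{insert}(1))$. Your rule gives $u=0$ and $\del(\emptyset,B)=\{10\}$, so the final state is $\{1,5\}$; the true final state is $\{1,10\}$.

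The same example shows three further problems. First, the failure of ``insert all batch elements, then delete the $\delta$ smallest'' is not tied to unsuccessful deletions: here none occurs, yet that rule gives $\{5,10\}$. Second, elements of $X$ beyond its $u$ smallest can be deleted, so extracting only the $u$ smallest from $S$ cannot suffice. Third, your ``swap chain'' lemma, which you never state precisely, would need candidates from $S$ that your procedure never retrieves.

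The missing idea is a clean characterization. First discard the deletions in $B$ that are unsuccessful from $X$ (a counter suffices), and let $\delta$ be the number of deletions that remain. Then $\del(X,B)$ is exactly the set of $\delta$ smallest elements of $X\cup\del(\emptyset,B)$. To see this, apply \cref{lm:rev-algo-correct} to $B$ with insertions of all elements of $X$ prepended. After the reverse scan processes the suffix $B$, its max-heap holds $\del(\emptyset,B)$ and its slot counter equals $\delta$. The remaining insertions of $X$ then simply retain the $\delta$ smallest. Synchronization becomes three steps: insert $\del(\emptyset,B)$ into $S$, batch-extract the $\delta$ smallest (not $u$), and insert $\alive(\emptyset,B)$.

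\textbf{The accounting.} You correctly identify the cost of extraction as the main obstacle but do not resolve it. The Dobkin--Munro bound is a lower bound and says nothing about the cost of your data structure, and the bucket alternative is left unspecified. What is needed is an elementary inequality. Let $\delta_i$ be the number of deletions in block $i$ and $n$ the number of operations. Then $\delta_i\log_2(n/\delta_i)=2\delta_i\log_2 i+\delta_i\log_2\bigl(n/(i^2\delta_i)\bigr)\le 2\delta_i\log_2 i+n/i^2$, and $\sum_i n/i^2=\Oh(n)$. Hence the selectable-heap cost $\Oh(\delta_i\log(n/\delta_i))$ of the $i$-th synchronization is covered by $\Oh(1)$ per operation plus $\Oh(\log i)$ per deletion in block $i$. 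This fits your credit scheme.
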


Before we prove \cref{thm:findmin-ub}, we review an important data structure called the \emph{selectable heap} that we use as a component.
Selectable heaps augment regular heaps with batched deletions which are executed more efficiently than individual ones.

\begin{lemma}[Selectable heap~\cite{Sandlund2022}]
    There is a deterministic heap that supports $\alg{insert}$ and $\alg{find-min}$ operations in constant amortized time and can delete and return the $\ell$ smallest elements from the heap in $\Oh(\ell \log \frac{m}{\ell})$ amortized time for an arbitrary $\ell \le m$.
    Here $m$ denotes the current size of the heap.\lipicsEnd
\end{lemma}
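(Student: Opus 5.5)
Since this lemma is quoted from~\cite{Sandlund2022}, we could simply cite it. For a self-contained argument, we would build the structure from two standard tools. The first is an array-based binary heap with linear-time bottom-up construction. The second is Frederickson's heap-selection algorithm, which finds the $\ell$ smallest elements of a heap-ordered binary tree in $\Oh(\ell)$ time by exploring only the top of the tree. The heap has three parts: a main binary heap $M$, an unordered insertion buffer $B$, and a stored pointer to $\min(B)$. Then $\alg{insert}$ appends to $B$ and updates the pointer, and $\alg{find-min}$ compares the root of $M$ with $\min(B)$. Both take $\Oh(1)$ worst-case time. Whenever $|B| > |M|$, we rebuild $M$ from $\elems(M)\cup B$ by bottom-up heapify. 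This costs $\Oh(|B|)$, which we charge to the buffered insertions.

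For a batch deletion of the $\ell$ smallest elements, we would treat two regimes separately. If $\ell \ge m/4$, we use linear-time selection~\cite{BlumEtAl1973} on $\elems(M)\cup B$ and then rebuild everything. This costs $\Oh(m)=\Oh(\ell)$. If $\ell < m/4$, we find the $\ell$ smallest elements of $M$ with Frederickson's algorithm in $\Oh(\ell)$ time. We must also find the at most $\ell$ smallest elements of $B$. To make this cheap, $B$ should itself be partially organized rather than a flat list; see the obstacle paragraph. We then merge the two candidate sets by linear selection to obtain the true $\ell$ smallest.

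Removing them from $M$ leaves $\ell$ holes at positions $P$, and we fill these holes with the last $\ell$ array entries. The key combinatorial step is the repair cost. In a complete binary tree with $m$ nodes, the union of the root paths of any $\ell$ positions has only $\Oh(\ell\log(m/\ell))$ nodes: the top $\log\ell$ levels contribute $\Oh(\ell)$ nodes, and below that each path contributes at most $\log(m/\ell)$ nodes. We would restore heap order by running the bottom-up sift-down only on this ancestor set, processed in order of decreasing depth. This costs $\Oh(\ell\log(m/\ell))$, using the same height-summation argument as the linear-time heapify bound.

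I expect the main obstacle to be the buffer $B$. We need both $\Oh(1)$ amortized insertion and the ability to extract its few smallest elements without scanning all of it. My first attempt would be to organize $B$ as a short sequence of binary heaps of geometrically increasing sizes, each built by linear-time heapify when the previous level overflows. The invariant would be that the total size of the small levels is $o(|M|)$, so a batch deletion can afford to run Frederickson selection on each level. As a fallback, I would use a potential argument, with potential proportional to $|B|$ times a constant, showing that the selection work inside $B$ is paid either by $\ell$ or by earlier insertions. If that fails, I would defer to the soft-heap-based implementation of~\cite{Sandlund2022}.
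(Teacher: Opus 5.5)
The paper does not prove this lemma; it quotes it from Sandlund and Zhang~\cite{Sandlund2022}. Your opening remark, and your last-resort fallback, are therefore exactly the paper's treatment and suffice. Your self-contained construction, however, does not establish the statement, and the buffer is a genuine gap rather than a detail.

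If $B$ is an unordered list, its at most $\ell$ smallest elements cannot be found without reading all of $B$. Also, once a batch deletion removes $\min(B)$, the pointer that \alg{find-min} needs can only be restored by a full scan. Since $|B|$ can be close to $|M|\approx m/2$, a long run of batch deletions with $\ell=1$ costs $\Theta(m)$ each against a budget of $\Oh(\log m)$ each. The same buffered elements are rescanned every time, so a constant potential per buffered element cannot pay for this. Heapifying $M\cup B$ at every deletion instead costs $\Theta(m)$ even when $|B|=1$. Your geometric-levels repair fails on the other side. Rebuilding the next level by heapify on each overflow is the Bentley--Saxe logarithmic method: every element is re-heapified at each level it passes through, so \alg{insert} costs $\Theta(\log m)$ amortized in the worst case, not $\Oh(1)$.

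The missing idea is to never merge buffered batches. One way to do this:
at each batch deletion, heapify the current buffer into a new binary heap, paid for by its insertions. Insert its root into an auxiliary binary heap $R$ over the roots of all heaps, at cost $\Oh(\log m)$. This fits the budget, since $\ell\log(m/\ell)=\Omega(\log m)$ for $1\le\ell\le m$ under the paper's convention for $\log$. All stored elements then form one heap-ordered tree of degree at most $4$: a heap root has two children in $R$ and two in its own heap. After splitting each node into a constant-size binary gadget, Frederickson's algorithm finds the $\ell$ smallest elements in $\Oh(\ell)$ time. The buffer is emptied at every batch deletion, so its minimum pointer never goes stale, and \alg{find-min} just compares it with the top of $R$.

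For the repair you need a fact your write-up misses: the $\ell$ smallest elements of a heap-ordered tree form a parent-closed set containing the root. Your bound via root paths of $\ell$ \emph{arbitrary} positions is wrong as stated. Height summation over such a set gives only $\Oh(\ell\log^2(m/\ell))$: for $\ell$ evenly spaced leaves there are $\ell$ disjoint paths of length $\log(m/\ell)$ below depth $\log\ell$, so heights do not decay geometrically as in heapify. A parent-closed set $P$ in a heap of size $s$ has at most $2^d$ nodes at depth $d$, so its heights sum to at most $|P|\log(s/|P|)+\Oh(|P|)$. Apply this to the deleted set $P_j$ in each touched heap $H_j$, and to the set of touched roots in $R$, which is also parent-closed. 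The bottom-up sift-downs then cost $\Oh(\sum_j |P_j|\log(|H_j|/|P_j|)+\ell\log(m/\ell))$, which is $\Oh(\ell\log(m/\ell))$ by concavity of the logarithm.
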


As will be clear from the proof below, sync heaps can be seen as a generalization of selectable heaps:
while selectable heaps allow multiple deletions to be processed in one batch, sync heaps allow batching an arbitrary sequence of interleaved insertions and deletions in the same time complexity.\footnote{This is true for our formulation of selectable heaps, but it is worth noting that Sandlund and Zhang's data structure~\cite{Sandlund2022} also allows \alg{decrease-key} operations, which we do not support.}

\begin{proof}[Proof of \cref{thm:findmin-ub}]
    Maintain a selectable heap $\hH$, initially empty.
    Between consecutive observations, we do not execute $\alg{insert}$ and silent $\alg{delete-min}$ operations on $\hH$ explicitly but rather buffer them and process them in a batch when the next observation occurs.
    (To simplify presentation, we regard initialization as an initial dummy observation.)
    Thus, throughout the execution, $\hH$ represents the state of the heap at the last observation.

    Suppose an observation occurs, and let $\seq$ be the buffered sequence of operations since the last observation.
    Let $X \coloneqq \elems(\hH)$ be the state at the previous observation.
    First, in $\Oh(|\seq|)$ time we remove from $\seq$ all deletions that would be unsuccessful if $\seq$ were executed from state $X$.
    Let $\delta$ be the number of deletions that remain.
    By construction, all of them are successful when $\seq$ is executed from $X$.
    
    Now run the heap evaluation algorithm of \cref{thm:pq-set-main-thm} on $\seq$ to compute $\del(\emptyset, \seq)$ and $\alive(\emptyset, \seq)$.
    As we show later, the elements deleted when $\seq$ is executed from $X$ are precisely the $\delta$ smallest elements of $X \cup \del(\emptyset, \seq)$.
    Accordingly, to synchronize $\hH$, we perform three steps:~(1)~insert all elements of $\del(\emptyset, \seq)$ into $\hH$;~(2)~delete the $\delta$ smallest elements from $\hH$ (denote the returned set by~$\Delta$);~(3)~insert all elements of $\alive(\emptyset, \seq)$ into $\hH$.
    The second step is executed using the selectable heap's batch-deletion operations.

    After synchronization, if the current observation is a $\alg{find-min}$, simply query $\hH$ for the minimum element.
    If instead the current observation is a $\alg{reveal-deletions}$, return the union of all computed sets $\Delta$ since the last $\alg{reveal-deletions}$ operation.

    \newcommand{\sH}{H^\star}
\subparagraph{Correctness.}
    For the analysis, consider an exact heap $\sH$ that executes every modification immediately rather than buffering it until the next observation. 
    We prove by induction that after every observation, $\elems(\hH) = \elems(\sH)$. 
    At initialization, the claim is trivial.
    For the inductive step, assume that $X = \elems(\hH)$ is the state of $\sH$ immediately after the previous observation.
    Since the preprocessing removed all unsuccessful deletions, all remaining $\delta$ deletions in $\seq$ succeed when executed from $X$, and hence $|\del(X, \seq)| = \delta$.

    It remains to prove the key fact used in the synchronization step: $\Delta = \del(X, \seq)$.
    Let $\seq_X$ be a sequence of operations obtained by prepending insertions of all elements of $X$ to $\seq$.
    Executing $\seq_X$ from an empty heap is equivalent to executing $\seq$ from $X$, so $\del(\emptyset, \seq_X) = \del(X, \seq)$.

    Consider an execution of \revalg (\cref{alg:reversed}) on $\seq_X$ and denote its internal max-heap by $M$.
    By \cref{lm:rev-algo-correct}, the final state of $M$ is $\del(\emptyset, \seq_X) = \del(X, \seq)$ and therefore has size $\delta$.
    Now consider how this run proceeds.
    After processing the suffix $\seq$ of $\seq_X$, we have $\elems(M) = \del(\emptyset, \seq)$ by \cref{lm:rev-algo-correct},
    and the deletion-slot counter is equal to $\delta$.

    The algorithm then processes the prepended insertions of elements of $X$.
    Each element of $X$ is inserted into $M$, and whenever the number of elements in $M$ exceeds $\delta$, the maximum element is removed from $M$.
    Thus, after processing all elements of $X$, the heap $M$ contains the $\delta$ smallest elements of $X \cup \del(\emptyset, \seq)$.
    This is precisely the set $\Delta$ returned by $\hH$ during synchronization.
    At the same time, \cref{lm:rev-algo-correct} implies that this set is $\del(X, \seq)$.
    Therefore, $\Delta = \del(X, \seq)$, as claimed.
    
    \subparagraph{Time complexity.}
    Processing the block $\seq$ takes $\Oh(|\seq| + \delta \log (n / \delta))$ time, where $n$ is the total number of operations. (Here, $\delta \log (n / \delta) = 0$ if $\delta = 0$.)
    In total, the algorithm takes
    \begin{align}\label{eq1}
        \Oh\Big(n + \sum_{i=1}^{k} \delta_i \log (n / \delta_i)\Big)
    \end{align}
    time, where $\delta_i$ is the number of successful deletions between observations $i-1$ and $i$. 
    We omit terms for which $\delta_i = 0$ and rewrite
    \begin{align*}
        n + \sum_{i=1}^{k} \delta_i \log (n / \delta_i) &\le 2n + \sum_{i=1}^{k} \delta_i \log_{\mathbf{2}} (n / \delta_i) 
                                                        = 2n + 2 \sum_{i=1}^k \delta_i \log_2 i + \sum_{i=1}^k \delta_i \log_2 (n / (i^2 \delta_i))\\
                                                        &\le 2n + 2 \sum_{i=1}^k \delta_i \log_2 i + \sum_{i=1}^k n/i^2.
    \end{align*}
    Because $\sum_{i=1}^{\infty} 1/i^2 = \Oh(1)$, we conclude the proof of the time complexity bound.
    Each of the $n$ operations is charged with $\Oh(1)$, and each deletion between observations $i-1$ and $i$ is additionally charged $\Oh(\log i)$.
\end{proof}

The time complexity of \cref{thm:findmin-ub} is optimal due to an information-theoretic lower bound.
The lower bound holds even without $\alg{reveal-deletions}$ operations.
If all insertions happen before the first $\alg{delete-min}$, the problem becomes the \emph{multiple selection problem}: given an array of size $N$, compute the elements of (zero-based) ranks $p_1, p_2, \ldots, p_k$, where $p_i$ is the number of (silent) $\alg{delete-min}$ operations before the $i$-th $\alg{find-min}$ query.
It is known~\cite[Theorem 1]{Dobkin1981} that solving the multiple selection problem requires \[\Omega\left(N + \sum_{i=1}^k (p_i - p_{i-1}) \log \left(\frac{N}{p_i - p_{i-1}}\right)\right)\] comparisons, where $p_0 = -1$.
Informally, Dobkin and Munro show~\cite[Theorem 1]{Dobkin1981} that computing elements of ranks $p_1, p_2, \ldots, p_k$ is sufficient to break the array elements into $k+1$ subarrays according to the gaps between the ranks.
Sorting the whole array can now be reduced to sorting each individual subarray in $\Oh(\sum_{i=1}^{k+1} (p_i - p_{i-1}) \log (p_i - p_{i-1}))$ total time, where $p_{k+1}=N$.
Hence, the sorting lower bound $\Omega(N \log N)$ implies the multiple selection lower bound.
This bound exactly matches the time complexity of our algorithm~(\ref{eq1}) up to a constant factor.
In the worst case, when all ranks $p$ are evenly spaced, we recover the $\Omega(N \log k)$ lower bound.

\subparagraph{Acknowledgements.}
All scientific ideas and results in this paper were developed by the authors, who also wrote the entire manuscript, including the complete initial draft. After this human-authored draft was completed, large language models were used to suggest ways to simplify the presentation of some proofs in \seccref{sec:pq-set}. They were subsequently used to flag minor errors and to suggest improvements in wording. The authors independently evaluated each suggestion and decided whether and how to incorporate it. The authors take full responsibility for the content of the paper.

\bibliography{refs}

\end{document}